\newtheorem{theorem}{Theorem}
\newtheorem{lemma}{Lemma}
\newtheorem{definition}{Definition}
\newtheorem{remark}{Remark}
\begin{document}
\title{\bf{ New dynamic and verifiable multi-secret sharing schemes based on LFSR public key cryptosystem}}
\author{{\bf  Jing Yang$^{1*}$,  Fang-Wei Fu$^1$}\\
 {\footnotesize \emph{ 1. Chern Institute of Mathematics and LPMC, Nankai University}}\\
  {\footnotesize  \emph{Tianjin, 300071, P. R. China}}\\
  {\footnotesize  \emph{$^*$Corresponding author: yangjing0804@mail.nankai.edu.cn}}}
\date{}

\maketitle \noindent {\small {\bf Abstract} A verifiable multi-secret sharing (VMSS) scheme enables the dealer to share multiple secrets, and the deception of both participants and the dealer can be detected. After analyzing the security of VMSS schemes proposed by Mashhadi and Dehkordi in 2015, we illustrate that they cannot detect some deception of the dealer. By using nonhomogeneous linear recursion and LFSR public key cryptosystem, we introduce two new VMSS schemes. Our schemes can not only overcome the drawback mentioned above, but also have shorter private/public key length at the same safety level. Besides, our schemes have dynamism.
 \vskip 1mm

\noindent
 {\small {\bf Keywords:} Verifiable multi-secret sharing; Nonhomogeneous linear recursion; LFSR public key cryptosystem; Key length; Dynamism}

\vskip 3mm \baselineskip 0.2in

\section{Introduction}
\qquad With the rapid development of Internet, the secure storage and transmission of information have become more and more important. The security of the information depends on the security of the cryptosystem, which depends on the keys used in the system. It is obviously insecure to have only one key holder, therefore secret sharing was proposed to solve the problem by distributing the keys among several members, which is significant to not only prevent the overcentralization of the key management but also guarantee the integrity and confidentiality of the keys.

However, there are some problems in the initial secret sharing scheme:

(1) They can only share one secret once;

(2) They need secure channel to distribute shares;

(3) They cannot perceive the detective behavior of both the dealer and the participants;

(4) The shares held by participants cannot be reused;

(5) If participants join in or quit from the scheme, all the shares need to be updated;

(6) When the dealer changes the threshold, all the shares need to be altered.

In order to overcome the weakness of the original scheme, researchers have proposed several improved schemes in recent years. In 2004, Yang et al. presented a new multi-secret sharing scheme(YCH)\cite{Yang:2004:MSS:2626452.2627067}. Based on Feldman's scheme \cite{Feldman:1987:PSN:1382440.1383000}, Shao et al. proposed an improved scheme \cite{Shao:2005:NEV:2614701.2615126} in 2005, which still needs a private channel. In 2006, Zhao et al. proposed an effective VMSS scheme (ZZZ) \cite{Zhao:2007:PVM:1222223.1222354}. Since public key cryptography is utilized in the verification phase, the private channel is unnecessary.

In 2008, Massoud and Samaneh \cite{HADIANDEHKORDI20082262} presented two efficient VMSS schemes, which employ the intractability of the discrete logarithm and RSA cryptosystem \cite{Rivest:1978:MOD:359340.359342} to modify the YCH scheme. For simplicity, we call the first scheme in \cite{HADIANDEHKORDI20082262} MS1 scheme, and the second scheme in \cite{HADIANDEHKORDI20082262} MS2 scheme. In 2016, Liu et al. \cite{Liu:2016:AVM:2869973.2870260}  found that ZZZ scheme, MS1 scheme and MS2 scheme cannot resist cheating by the dealer, and proposed two modified schemes utilizing RSA encryption system. Similarly, we call the first scheme in \cite{Liu:2016:AVM:2869973.2870260} LZZ1 scheme, and the second scheme in \cite{Liu:2016:AVM:2869973.2870260} LZZ2 scheme. In 2015, Massoud and Samaneh proposed two new VMSS schemes (MS schemes) \cite{MASHHADI201531} by nonhomogeneous linear recursions and LFSR public key cryptosystem \cite{Gong:1999:PCB:2263211.2266021,10.1007/3-540-45537-X_22}. Likewise, the two schemes have the same drawback as the schemes in \cite{HADIANDEHKORDI20082262}, and we call the first scheme in \cite{MASHHADI201531} MS3 scheme, and the second scheme in \cite{MASHHADI201531} MS4 scheme.

In this work, we will present two new dynamic VMSS schemes using LFSR public key cryptosystem based on the MS schemes \cite{MASHHADI201531}, which overcome the disadvantages of the previous schemes and have shorter key length than the schemes in \cite{Liu:2016:AVM:2869973.2870260}. Moreover, our schemes allow participants to join in or quit from the group optionally and let the dealer to change the number or value of shared secrets, even the threshold according to practical situation dynamically.

The rest of this paper is organized as follows. In Section 2, we review the nonhomogeneous linear recursion, the LFSR public key cryptosystem, and give the attack to MS schemes. In Section 3, we present our two schemes. We propose the security analysis in Section 4, while Section 5 gives the performance analysis. Finally, we conclude our schemes in Section 6.

\section{Preliminaries}

\subsection{Linear recursion}
\qquad In this subsection, we introduce the linear recursion briefly, which you can refer to\cite{Biggs:2002:DM:579088} for a detailed description.

\begin{definition}

A linear recursion is defined by the equations:

$$
\begin{cases}
u_{0}=c_{0},u_{1}=c_{1},\cdots,u_{k-1}=c_{k-1},\\
u_{n+k}+a_{1}u_{n+k-1}+\cdots +a_{k}u_{n}=f(n) \quad (n\geq 0),
\end{cases}$$
where $c_{0},c_{1},\cdots, c_{k-1}$ and $a_{1}, a_{2}, \cdots,a_{k}$ are predefined real constants. $k$ is a positive variable, the degree of this linear recursion. If $f(n)=0$, the linear recursion is homogeneous. Otherwise, it is nonhomogeneous.
\end{definition}

\begin{definition}
For a linear sequence $u_{n}(n\geq 0)$ with dgree $k$ defined above, we give the following concepts:

(1) Auxiliary equation: $x^{k}+a_{1}x^{k-1}+\cdots+a_{k}=0$.

(2) Generating function: $U(x)=\Sigma_{n=0}^{\infty}u_{n}x^{n}$.
\end{definition}

\begin{lemma}
We assume that sequence $u_{n}(n\geq 0)$ withe degree $k$, and its auxiliary equation is $(x-\alpha_{1})^{m_{1}}(x-\alpha_{2})^{m_{2}}\cdots(x-\alpha_{l})^{m_{l}}=0$, where $m_{1}+m_{2}+\cdots+m_{l}=k$.
Then its generating function is

$$ U(x)=\frac{R(x)}{(1-\alpha_{1}x)^{m_{1}}(1-\alpha_{2}x)^{m_{2}}\cdots(1-\alpha_{l}x)^{m_{l}}},$$
where $R(x)$ is a polynomial and $\deg(R(x))<k.$

And $u_{n}=p_{1}(n)\alpha_{1}^{n}+p_{2}(n)\alpha_{2}^{n}+\cdots+p_{l}(n)\alpha_{l}^{n},$
where $p_{j}(n)=A_{0}+A_{1}n+A_{2}n^{2}+\cdots+A_{m_{j-1}}n^{m_{j-1}} (j=1,2,\cdots,l)$ and $A_{0},A_{1},\cdots,A_{m_{j-1}}$ are constants defined by $a_{1},a_{2},\cdots,a_{k}.$
\end{lemma}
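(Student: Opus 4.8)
The plan is to treat the homogeneous case (i.e.\ $f(n)=0$), which is what the stated form of $U(x)$ reflects, and to pass between the auxiliary polynomial and the denominator of $U(x)$ by the standard reciprocation trick. First I would introduce the reciprocal polynomial $q(x)=1+a_1x+a_2x^2+\cdots+a_kx^k$ and observe that, since the auxiliary equation factors as $x^k+a_1x^{k-1}+\cdots+a_k=\prod_{j=1}^{l}(x-\alpha_j)^{m_j}$, substituting $1/x$ and multiplying by $x^k$ gives $q(x)=\prod_{j=1}^{l}(1-\alpha_jx)^{m_j}$. This identifies the claimed denominator with $q(x)$.

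The first assertion then follows by computing $q(x)U(x)$. Writing $a_0=1$ and adopting the convention $u_j=0$ for $j<0$, the coefficient of $x^{m}$ in $q(x)U(x)$ is $\sum_{i=0}^{k}a_iu_{m-i}$. For every $m\ge k$ this equals $u_{m}+a_1u_{m-1}+\cdots+a_ku_{m-k}$, which is exactly the left-hand side of the recursion with $n=m-k\ge 0$, and hence vanishes. Therefore all coefficients of $x^{m}$ with $m\ge k$ are zero, so $q(x)U(x)=R(x)$ is a polynomial with $\deg R(x)<k$; dividing by $q(x)$ yields the asserted expression for $U(x)$. The initial data $c_0,\dots,c_{k-1}$ enter only through the low-degree coefficients of $R(x)$.

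For the second assertion I would apply partial fraction decomposition to $U(x)=R(x)/q(x)$. Because $\deg R(x)<k=\deg q(x)$ and $q(x)=\prod_{j}(1-\alpha_jx)^{m_j}$, there exist constants $c_{j,s}$ with $U(x)=\sum_{j=1}^{l}\sum_{s=1}^{m_j}c_{j,s}(1-\alpha_jx)^{-s}$. Expanding each term by the generalized binomial series $(1-\alpha x)^{-s}=\sum_{n\ge 0}\binom{n+s-1}{s-1}\alpha^nx^n$ and reading off the coefficient of $x^n$ gives $u_n=\sum_{j=1}^{l}p_j(n)\alpha_j^n$ with $p_j(n)=\sum_{s=1}^{m_j}c_{j,s}\binom{n+s-1}{s-1}$. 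Since $\binom{n+s-1}{s-1}$ is a polynomial in $n$ of degree $s-1$, each $p_j(n)$ is a polynomial of degree at most $m_j-1$, matching the claimed form, with coefficients ultimately determined by $a_1,\dots,a_k$ through the partial-fraction constants.

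I expect the only genuinely delicate point to be the bookkeeping in the coefficient identity for $q(x)U(x)$: one must handle the boundary terms where the index $m-i$ drops below $0$ correctly (using $u_j=0$ for $j<0$) and verify that the vanishing holds for all $m\ge k$ but generally fails for $m<k$, which is precisely what leaves behind the numerator $R(x)$ of degree $<k$. The partial-fraction step and the binomial expansion are routine once the denominator is known.
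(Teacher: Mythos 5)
Your proof is correct. Note that the paper itself gives no proof of this lemma---it is quoted as a standard fact about linear recurrences with a citation to Biggs's textbook---so there is nothing to compare against; your argument (reciprocating the auxiliary polynomial to identify the denominator $\prod_j(1-\alpha_j x)^{m_j}$, showing $q(x)U(x)$ truncates to a polynomial of degree $<k$ via the recursion, then partial fractions and the negative binomial series) is exactly the standard derivation, and it is the one the paper implicitly relies on when it applies Lemma~1 inside the proof of Theorem~1. One small remark: your proof actually shows that the coefficients of the $p_j(n)$ depend on the initial values $c_0,\dots,c_{k-1}$ as well (through $R(x)$ and hence through the partial-fraction constants), so the lemma's claim that they are ``defined by $a_1,\dots,a_k$'' alone is imprecise; your write-up handles this correctly, and you rightly restrict to the homogeneous case, which is the only case in which the stated bound $\deg R(x)<k$ holds.
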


Our schemes use two examples of nonhomogeneous linear recursion$(NLR)$ shown as follows:

\begin{theorem}
Utilizing $[NLR1]$ to generate $NLR$ sequence $(u_{n})_{n\geq 0}$, where $[NLR1]$ have the following form:

\begin{equation}
[NLR1]=\left\{
\begin{aligned}
& u_{0}=c_{0},u_{1}=c_{1},\cdots,u_{k-1}=c_{k-1},\\
&\sum_{j=0}^{k}\left( {\begin{array}{*{20}{ccc}}
	k\\
	j
	\end{array}} \right)u_{n+k-j}=(-1)^{n}c \quad (n\geq 0),
\end{aligned}
\right.
\end{equation}
where $c,c_{0},c_{1},\cdots, c_{k-1}$  are predefined real constants. Therefore $u_{n}=p(n)(-1)^{n},$ where $p(n)$ is a polynomial of $n$ with degree at most $k$.
\end{theorem}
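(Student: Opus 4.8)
The plan is to absorb the alternating sign into the unknown sequence, turning the nonhomogeneous recursion into a pure finite-difference equation whose solution theory is classical. First I would identify the auxiliary equation of the homogeneous part. Since $\sum_{j=0}^{k}\binom{k}{j}x^{k-j}=(1+x)^{k}$, the auxiliary equation is $(x+1)^{k}=0$, so $-1$ is its only root, with multiplicity $k$. By Lemma~1 this already forces the homogeneous solutions to have the form $q(n)(-1)^{n}$ with $\deg q\le k-1$; the forcing term $(-1)^{n}c$ resonates with exactly this root, which is what ultimately pushes the degree up to $k$.

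The key step is the substitution $u_{n}=(-1)^{n}w_{n}$. Plugging this in and factoring out $(-1)^{n+k}$, the left-hand side of the recursion becomes $(-1)^{n+k}\sum_{j=0}^{k}(-1)^{j}\binom{k}{j}w_{n+k-j}$, and re-indexing by $i=k-j$ shows the inner sum is precisely the $k$-th forward difference $\Delta^{k}w_{n}=\sum_{i=0}^{k}(-1)^{k-i}\binom{k}{i}w_{n+i}$. Hence $[NLR1]$ is equivalent to $\Delta^{k}w_{n}=(-1)^{k}c$, a constant independent of $n$. This sign bookkeeping is the one place where an error is easy to make, so I would verify the identity $(-1)^{k}\Delta^{k}w_{n}=c$ explicitly for small $k$ before proceeding.

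Next I would solve this finite-difference equation using the standard facts that $\Delta^{k}$ annihilates exactly the polynomials in $n$ of degree at most $k-1$ and that $\Delta^{k}(n^{k})=k!$. A particular solution is therefore $w_{n}=Bn^{k}$ with $B=(-1)^{k}c/k!$, and the general solution is $w_{n}=Bn^{k}+q(n)$ with $\deg q\le k-1$, so that $u_{n}=(-1)^{n}p(n)$ with $p(n)=Bn^{k}+q(n)$ of degree at most $k$, the claimed form. To pin down $q$ from the initial data I would impose $p(i)(-1)^{i}=c_{i}$ for $i=0,\dots,k-1$; this is a system of $k$ linear conditions on the $k$ coefficients of $q$, and since the evaluation map $q\mapsto(q(0),\dots,q(k-1))$ is invertible (Vandermonde, equivalently Lagrange interpolation), a unique such $q$ exists.

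Finally, because the leading coefficient $\binom{k}{0}=1$ is nonzero, the recursion determines $u_{n+k}$ from the $k$ preceding terms, so the sequence satisfying $[NLR1]$ together with the prescribed initial conditions is unique; since the sequence $(-1)^{n}p(n)$ just constructed satisfies both the recursion and the initial conditions, it must coincide with $(u_{n})_{n\ge 0}$, completing the argument. The main obstacle is not any single deep step but the bookkeeping in the second paragraph---recognizing the binomial convolution as the iterated difference operator $\Delta^{k}$ and tracking the powers of $-1$ correctly; once that reduction is in hand, everything reduces to the textbook solution theory for $\Delta^{k}w=\mathrm{const}$.
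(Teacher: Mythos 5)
Your proof is correct, but it takes a genuinely different route from the paper's. The paper works with the generating function: it multiplies $U(x)=\sum_{n\ge 0}u_nx^n$ by $(1+x)^k=\sum_{j=0}^k\binom{k}{j}x^j$, observes that the recursion forces the coefficients from degree $k$ onward to be $c,-c,c,\dots$, so that the product equals $g_1(x)+cx^k/(1+x)$ with $\deg g_1=k-1$, and hence $U(x)=\bigl(g_1(x)(1+x)+cx^k\bigr)/(1+x)^{k+1}$; the conclusion $u_n=p(n)(-1)^n$ with $\deg p\le k$ then follows from Lemma~1 applied to a pole of order $k+1$ at $x=-1$. You instead substitute $u_n=(-1)^nw_n$ to convert the recursion into the constant-coefficient difference equation $\Delta^k w_n=(-1)^kc$ and invoke the classical facts that $\ker\Delta^k$ consists exactly of the polynomials of degree at most $k-1$ and that $\Delta^k(n^k)=k!$; your sign bookkeeping in the reduction is right, and your existence-plus-uniqueness closing step is clean. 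What your route buys: it is self-contained (it does not lean on the partial-fraction machinery behind Lemma~1), it produces the leading coefficient $(-1)^kc/k!$ of $p$ explicitly, and it makes explicit the uniqueness argument that the paper leaves implicit. What the paper's route buys: the generating-function computation transfers with no extra work to other forcing terms (indeed the paper reuses the same template for $[NLR2]$ in Theorem~2), whereas your substitution is tailored to the resonance with the root $-1$.
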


\begin{proof}
Utilizing equation (1), we obtain
\begin{align*}
\bigg(\sum_{j=0}^{k}\left( {\begin{array}{*{20}{ccc}}
	k\\
	j
	\end{array}} \right)x^{j}\bigg)\sum_{n=0}^{\infty}u_{n}x^{n}
    &=u_{0}+(u_{1}+ku_{0})x+\cdots+\bigg(\sum_{j=0}^{k-1}\left( {\begin{array}{*{20}{ccc}}
	k\\
	j
	\end{array}} \right)u_{k-1-j}\bigg)x^{k-1}\\
    &+\bigg(\sum_{j=0}^{k}\left( {\begin{array}{*{20}{ccc}}
	k\\
	j
	\end{array}} \right)u_{k-j}\bigg)x^{k}+\bigg(\sum_{j=0}^{k}\left( {\begin{array}{*{20}{ccc}}
	k\\
	j
	\end{array}} \right)u_{k+1-j}\bigg)x^{k+1}+\cdots\\
    &\overset{(1)}{=}u_{0}+(u_{1}+ku_{0})x+\cdots+\bigg(\sum_{j=0}^{k-1}\left( {\begin{array}{*{20}{ccc}}
	k\\
	j
	\end{array}} \right)u_{k-1-j}\bigg)x^{k-1}\\
    &+cx^{k}-cx^{k+1}+\cdots\\
\end{align*}
\begin{align*}
    &=g_{1}(x)+cx^{k}(1-x+x^{2}-x^{3}+\cdots)\\
    &=g_{1}(x)+\dfrac{cx^{k}}{1+x}\\
    &=\dfrac{g_{1}(x)(1+x)+cx^{k}}{1+x}
\end{align*}
where $g_{1}(x)$ is a polynomial with degree $(k-1)$. Consequently,
$$\sum_{n=0}^{\infty}u_{n}x^{n}=\dfrac{g_{1}(x)(1+x)+cx^{k}}{(1+x)^{k+1}}.$$
From Lemma 1, we can get $u_{n}=p(n)(-1)^{n}$, where $p(n)$ is a at most $k$-degree polynomial.
\end{proof}

\begin{theorem}
Utilizing $[NLR2]$ to generate $NLR$ sequence $(u_{n})_{n\geq 0}$, where $[NLR2]$ have the following form:

\begin{equation}
[NLR2]=\left\{
\begin{aligned}
& u_{0}=c_{0},u_{1}=c_{1},\cdots,u_{k-1}=c_{k-1},\\
&\sum_{j=0}^{k}\left( {\begin{array}{*{20}{ccc}}
	k\\
	j
	\end{array}} \right)(-1)^{j}u_{n+k-j}=c \quad (n\geq 0),
\end{aligned}
\right.
\end{equation}
where $c,c_{0},c_{1},\cdots, c_{k-1}$  are predefined real constants. Therefore $u_{n}=p(n)$, where $p(n)$ is a polynomial of $n$ with degree at most $k$.
\end{theorem}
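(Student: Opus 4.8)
The plan is to mirror the generating-function argument used for Theorem 1, tracking how the two modifications — the sign factor $(-1)^{j}$ inside the sum and the constant right-hand side $c$ (in place of the alternating $(-1)^{n}c$) — change the multiplier polynomial and the location of the denominator's root. First I would form $U(x)=\sum_{n=0}^{\infty}u_{n}x^{n}$ and multiply by $\sum_{j=0}^{k}\binom{k}{j}(-1)^{j}x^{j}=(1-x)^{k}$, which is the natural analogue of the factor $(1+x)^{k}$ appearing in Theorem 1.

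Next I would read off the coefficients of the product. The coefficient of $x^{m}$ in $(1-x)^{k}U(x)$ is $\sum_{j=0}^{\min(m,k)}\binom{k}{j}(-1)^{j}u_{m-j}$. For $m\geq k$, writing $m=n+k$ with $n\geq 0$, this equals $\sum_{j=0}^{k}\binom{k}{j}(-1)^{j}u_{n+k-j}=c$ by $[NLR2]$; for $0\leq m\leq k-1$ it is a fixed combination of the initial values $c_{0},\dots,c_{k-1}$, and these low-order terms collect into a polynomial $g_{2}(x)$ of degree at most $k-1$. Summing the tail via $c\sum_{m\geq k}x^{m}=\frac{cx^{k}}{1-x}$ then gives $(1-x)^{k}U(x)=g_{2}(x)+\frac{cx^{k}}{1-x}$, so that
$$U(x)=\frac{g_{2}(x)(1-x)+cx^{k}}{(1-x)^{k+1}}.$$

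Finally I would invoke Lemma 1. The numerator has degree at most $k$ and the denominator is $(1-x)^{k+1}$, so the hypotheses are satisfied with the single root $\alpha=1$ of multiplicity $k+1$. Since here $\alpha=1$ replaces the $\alpha=-1$ that governed Theorem 1, the factor $\alpha^{n}=1$ is trivial, and Lemma 1 yields $u_{n}=p(n)$ with $\deg p\leq k$, exactly as claimed.

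I expect no serious obstacle, since the result is structurally identical to Theorem 1 with $1+x$ replaced by $1-x$ and the alternating source term replaced by a constant one. The only point demanding care is the coefficient bookkeeping at the boundary: I must verify that the defining relation $=c$ is genuinely used only for indices $m\geq k$, and that every contribution with $m<k$ is absorbed into $g_{2}(x)$ rather than being mistakenly forced to equal $c$. This is precisely the step that isolates $g_{1}(x)$ in the proof of Theorem 1, so the argument carries over once the substitution is made correctly.
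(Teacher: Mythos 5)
Your proposal is correct and is exactly the argument the paper intends: the paper omits the proof of Theorem 2, stating only that it "can be completed by the method analogous to Theorem 1," and your write-up carries out precisely that analogue, replacing the multiplier $(1+x)^{k}$ by $(1-x)^{k}$ and the alternating source by a constant one to reach $U(x)=\bigl(g_{2}(x)(1-x)+cx^{k}\bigr)/(1-x)^{k+1}$ before invoking Lemma 1 with $\alpha=1$. Your bookkeeping at the boundary indices $m<k$ versus $m\geq k$ matches the step that isolates $g_{1}(x)$ in the proof of Theorem 1, so nothing further is needed.
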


The proof of Theorem 2 can be completed by the method analogous to Theorem 1.

\subsection{The LFSR public key cryptosystem}

\qquad At first, we introduce the third-order LFSR sequence \cite{Gong:1999:PCB:2263211.2266021,10.1007/3-540-45537-X_22}. Assuming that $f(x)=x^{3}-ax^{2}+bx-1$ is irreducible which is the characteristic polynomial of the following LFSR sequences, where $a,b\in F=GF(p)$ and $p$ is a prime.

\begin{definition}
A sequence $s=(s_{k})_{k\geq 0}$ satisfies the following conditions:
$$\begin{cases}
s_{0}=3,s_{1}=a,s_{2}=a^{2}-2b,\\
s_{k+3}-as_{k+2}+bs_{k+1}-s_{k}=0 \quad(k\geq 0).
\end{cases}$$
Then, we call $s=(s_{k})_{k\geq 0}$ is a third-order LFSR sequence whose characteristic polynomial is $f(x)$.
\end{definition}

We denote $s_{k}$ as $s_{k}(a,b)$, $s$ as $s(a,b)$, then we have the following Lemma.

\begin{lemma}\rm\cite{Gong:1999:PCB:2263211.2266021}\it
Let $f(x)=x^{3}- ax^{2} + bx- 1$ over $F = GF(p)$ generate the three-order LFSR sequences $s=(s_{k})_{k\geq 0}$. If $s_{-k}(a,b) = s_{k}(b,a)$, then $s_{k}(s_{e}(a,b),s_{-e}(a,b))=s_{ke}(a,b)$ for all positive integer $k$ and $e$.

\end{lemma}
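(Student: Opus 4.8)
The plan is to realize the sequence $s(a,b)$ concretely through the roots of its characteristic polynomial, and then to recognize the parameter substitution $(a,b)\mapsto(s_{e}(a,b),s_{-e}(a,b))$ as nothing other than the operation of raising those roots to the $e$-th power.

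First I would exploit irreducibility. Since $f(x)=x^{3}-ax^{2}+bx-1$ is irreducible over $GF(p)$, it splits in $GF(p^{3})$ into three distinct conjugate roots $\alpha,\beta,\gamma$, and Vieta's formulas give $\alpha+\beta+\gamma=a$, $\alpha\beta+\beta\gamma+\gamma\alpha=b$ and $\alpha\beta\gamma=1$. The power sums $\sigma_{k}=\alpha^{k}+\beta^{k}+\gamma^{k}$ satisfy the recurrence $\sigma_{k+3}-a\sigma_{k+2}+b\sigma_{k+1}-\sigma_{k}=0$ (each $\alpha^{k}$ is annihilated by $f$ acting as a shift operator), and one checks directly that $\sigma_{0}=3$, $\sigma_{1}=a$, $\sigma_{2}=a^{2}-2b$. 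Matching these initial values against Definition~3 gives $s_{k}(a,b)=\alpha^{k}+\beta^{k}+\gamma^{k}$. Because the trailing coefficient of $f$ is the unit $-1$, the recurrence may be run backwards, so the same identity persists for negative indices and $s_{-k}(a,b)=\alpha^{-k}+\beta^{-k}+\gamma^{-k}$.

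Next I would identify the new characteristic polynomial. Fix $e>0$ and set $A=s_{e}(a,b)$ and $B=s_{-e}(a,b)$. I compute the three elementary symmetric functions of $\alpha^{e},\beta^{e},\gamma^{e}$: their sum is $A$ by definition; their product is $(\alpha\beta\gamma)^{e}=1$; and, using $\alpha\beta\gamma=1$ to write $\alpha\beta=\gamma^{-1}$ and so on, the sum of pairwise products is $(\alpha\beta)^{e}+(\beta\gamma)^{e}+(\gamma\alpha)^{e}=\alpha^{-e}+\beta^{-e}+\gamma^{-e}=s_{-e}(a,b)=B$. This is precisely where the hypothesis $s_{-k}(a,b)=s_{k}(b,a)$ earns its keep, equivalently the fact that the reciprocals $\alpha^{-1},\beta^{-1},\gamma^{-1}$ are the roots of $x^{3}-bx^{2}+ax-1$. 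Hence $x^{3}-Ax^{2}+Bx-1=(x-\alpha^{e})(x-\beta^{e})(x-\gamma^{e})$; that is, $s(A,B)=s(s_{e}(a,b),s_{-e}(a,b))$ is exactly the LFSR sequence whose characteristic roots are $\alpha^{e},\beta^{e},\gamma^{e}$. Applying the power-sum representation from the first step to this sequence yields $s_{k}(A,B)=(\alpha^{e})^{k}+(\beta^{e})^{k}+(\gamma^{e})^{k}=\alpha^{ke}+\beta^{ke}+\gamma^{ke}=s_{ke}(a,b)$, which is the claim.

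I expect the delicate points to be bookkeeping rather than depth: checking that irreducibility forces three distinct roots so that the clean formula $s_{k}=\alpha^{k}+\beta^{k}+\gamma^{k}$ is legitimate, and extending the recurrence to negative indices carefully so that $s_{-e}$ is unambiguous. The one genuinely content-bearing computation is the middle-coefficient identity $(\alpha\beta)^{e}+(\beta\gamma)^{e}+(\gamma\alpha)^{e}=s_{-e}(a,b)$, which rests entirely on the normalization $\alpha\beta\gamma=1$ forced by the constant term $-1$ of $f$.
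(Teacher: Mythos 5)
The paper does not prove this lemma at all---it is imported verbatim from the cited Gong--Harn reference---and your argument is correct and is essentially the standard proof given there: identify $s_{k}(a,b)$ with the power sum $\alpha^{k}+\beta^{k}+\gamma^{k}$ of the roots, observe via $\alpha\beta\gamma=1$ that $\bigl(s_{e}(a,b),\,s_{-e}(a,b),\,1\bigr)$ are the elementary symmetric functions of $\alpha^{e},\beta^{e},\gamma^{e}$, and read off $s_{k}(s_{e},s_{-e})=\alpha^{ke}+\beta^{ke}+\gamma^{ke}=s_{ke}(a,b)$. The only point worth adding is that $x^{3}-s_{e}x^{2}+s_{-e}x-1$ need not be irreducible for every $e$, but your power-sum representation needs only the recurrence and the matching initial values $3,\,A,\,A^{2}-2B$, so the identity survives without that hypothesis.
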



\begin{definition}
(The LFSR public key cryptosystem)
A sender generates the public key and private key by the following operation:

(1) selects two primes $p$ and $q$, and computes $N=pq$. Notice that the next few steps are performed on $Z_{N}$ and the period of the irreducible polynomial is $\delta = (p^{2} + p + 1)(q^{2} + q + 1)$;

(2) selects $e$ such that $gcd(e,p^{i}-1)$, where $i=2,3$;

(3) computes $d$ such that $de = 1\: mod\: \delta$;

(4) publishes $e$ and $N$ as public key, then keeps $d$ as private key.

\noindent{\textbf{Enciphering:}}

In order to send $m = (m_{1},m_{2})(0 < m_{1}, m_{2} < N)$ to the receiver, the sender generates $c = (c_{1}, c_{2})$ as corresponding cipher text, where $c_{1} = s_{e}(m_{1},m_{2}), c_{2} = s_{-e}(m_{1},m_{2})$.

\noindent{\textbf{Deciphering:}}

When receiving $c = (c_{1}, c_{2})$, the receiver can get the corresponding plain text $m_{1}= s_{d} (c_{1}, c_{2}), m_{2}= s_{-d}(c_{1}, c_{2})$ by private key $d$.

\end{definition}

\subsection{Attack to MS schemes}

\qquad In this subsection, we give the attack to MS3 scheme, which is also true of MS4 scheme. Please refer to \cite{MASHHADI201531} for details of MS schemes. When recovering the secrets, they merely check the validity of $I_{i}'=s_{e_{i}}(s_{e_{0}}(a,b),s_{-e_{0}}(a,b))$ by $s_{d} (I_{i}',I_{-i}')=s_{e_{i}}(a,b)$, where $I_{-i}'=s_{-e_{i}}(s_{e_{0}}(a,b),s_{-e_{0}}(a,b))$, however, the consistency between $I_{i}'$ and $\{u_{i}\}$ is not checked. Thus, a malicious dealer can deceive the participants successfully, which means that:

When $1\leq i\leq k-1$,

(1) $D$ chooses a random $c<q_{1}$ and substitutes $R_{i}$ with $R_{i}'$ to calculate the equations below:
$$\left\{
\begin{aligned}
& u_{0}=R_{1},u_{1}=R_{2},\cdots,u_{i-1}'=R_{i}',\cdots,u_{k-2}=R_{k-1},\\
& \sum_{j=1}^{k}\left( {\begin{array}{*{20}{ccc}}
	k-1\\
	j-1
	\end{array}} \right)u_{n+k-j}=(-1)^{n}c\:mod \:q_{1} \quad (n\geq 0);
\end{aligned}
\right.$$
For $k-1 \leq i \leq m+l$, $D$ calculates $u_{i}$;

(2) $D$ calculates $y_{i}=R_{i}-u_{i-1}(k \leq i \leq m)$, and $r_{i} = S_{i}- u_{i+m}(1\leq i \leq l)$;

(3) $D$ releases $(s_{e_{0}}(a,b),d,r_{1},r_{2},\cdots,r_{l},y_{k},y_{k+1},\cdots,y_{m})$.

When $k-1< i\leq m$,

(1) $D$ chooses a random $c<q_{1}$ and considers the sequence generated by the equations below:
$$\left\{
\begin{aligned}
& u_{0}=R_{1},u_{1}=R_{2},\cdots,u_{k-2}=R_{k-1},\\
& \sum_{j=1}^{k}\left( {\begin{array}{*{20}{ccc}}
	k-1\\
	j-1
	\end{array}} \right)u_{n+k-j}=(-1)^{n}c\:mod \:q_{1} \quad (n\geq 0);
\end{aligned}
\right.$$
For $k-1 \leq i \leq m+l$, $D$ calculates $u_{i}$;

(2) $D$ substitutes $R_{i}$ with $R_{i}'$ to calculate $y_{i}'=R_{i}'-u_{i-1}$, then calculates $y_{j}(k-1 < j \leq m,j \neq i)$ and $r_{i} = P_{i} - u_{i+n}(1 \leq i \leq l)$ respectively;

(3) $D$ releases $(s_{e_{0}}(a,b),d,r_{1},r_{2},\cdots,r_{l},y_{k},y_{k+1},\cdots,y_{i}',\cdots,y_{m})$.

In the reconstruction phase, because the replacement is barely perceptible by $P_{i}$, those $k$ participants still provide real $R_{i}$ which conflicts with $\{u_{i}\}$ or $\{y_{i}\}$ generated by the dealer. Therefore, the recovered secrets are wrong. However, any at least $k$ honest participants exclusive of $M_{i}$ can reconstruct the shared secrets. Furthermore, if the dealer replaces more than one $R_{i}$ with invalid $R_{i}'$ , the situation gets even worse. In conclusion, the MS schemes cannot resist attack by a malicious dealer.
\section{The new VMSS schemes}

\qquad To avoid the attack mentioned above, based on MS schemes \cite{MASHHADI201531}, we present new VMSS schemes by examining consistency, which can detect deception of both participants and the dealer successfully.

\subsection{Scheme 1}
\qquad Scheme 1 utilizes the $[NLR1]$, the LFSR public key cryptosystem and the discrete logarithm problem.

\subsubsection{Initialization phase}

\qquad Suppose $D$ be the dealer, $P=\{P_{1},P_{2},\cdots,P_{m}\}$ be participants, and $k(k<m)$ be the threshold.

\noindent\textbf{Initialization of $D$:}

(1) $D$ selects $p_{0}$, $q_{0}$ $(p_{0} > q_{0})$ of bit-length $\lambda/2$, where $p_{0}$ and $q_{0}$ are two strong primes. Then $D$ calculates $N = p_{0}q_{0}$ of bit-length $\lambda$. Note that $\lambda$ here is the security parameter of the LFSR public key cryptosystem.

(2) $D$ randomly selects two primes $Q, q$ with bit-length more than $\lambda/2$, satisfying $Q|(q - 1)$, and
$Q>\left( {\begin{array}{*{20}{ccc}}
	k\\
	i
	\end{array}} \right)$
for $i=0,1,\cdots,k.$

(3) $D$ selects $g$ of $Z_{q}^{\ast}$ of order $Q$ satisfying that the discrete logarithm problem with base $g$ in $Z_{q}^{\ast}$ is infeasible.

(4) $D$ releases $(\lambda, N, Q, q, g)$ to participants.

\noindent\textbf{Initialization of participants:}

(1) $P_{i}$ of identity $ID_{i}$ selects two strong primes $p_{i}$ , $q_{i}$ $(p_{i} > q_{i})$, and then calculates $N_{i}= p_{i}q_{i}$ $(N_{i} > N)$. Note that the period of the irreducible polynomial in $Z_{N_{i}}$ is $\delta_{i}=(p_{i}^{2}+p_{i}+1)(q_{i}^{2}+q_{i}+1)$, then all the computations are performed in $Z_{N_{i}}$.

(2) $P_{i}$ randomly selects an integer $e_{i}\in [2,\delta_{i}]$ such that $gcd(e_{i},p_{i}^{j}-1)=1$, for $j=2,3$.

(3) $P_{i}$ calculates the integer $d_{i}$ satisfying $e_{i}d_{i}=1\:mod\: \delta_{i}$.

(4) $P_{i}$ passes $(ID_{i},e_{i},N_{i})$ to $D$ with a public channel, and keeps its shadow $d_{i}$ secret.

$D$ releases $(ID_{i},e_{i},N_{i}),i=1,2,\cdots,m$.

\begin{remark}
The released messages can be reused after this phase. In addition, $D$ cannot get any information about shadows, therefore these shadows are also reusable.
\end{remark}
\subsubsection{Construction phase}

\qquad Let $S_{1}, S_{2},\cdots, S_{l} \in Z_{Q}$ be $l$ secrets, where $0< S_{i} < Q(i=1,2,\cdots,l$). Then $D$ executes the steps as below to produce respective subshadow $u_{i-1}$:

(1) $D$ selects $c_{i} \in Z_{Q}^{\ast}$ for $i = 0,1,\cdots,k-1$ at random.

(2) $D$ randomly selects a constant $c$ satisfying $c<Q$ and considers $[NLR1]$ as below:
$$[NLR1]=\left\{
\begin{aligned}
& u_{0}=c_{0},u_{1}=c_{1},\cdots,u_{k-1}=c_{k-1},\\
& \sum_{j=0}^{k}\left( {\begin{array}{*{20}{ccc}}
	k\\
	j
	\end{array}} \right)u_{n_k-j}=(-1)^{n}c\:mod\:Q \quad (n\geq 0).
\end{aligned}
\right.$$

(3) For $ k \leq i \leq m+l $, $D$ calculates $u_{i}$.

(4) $D$ calculates $y_{i} = S_{i} - u_{m+i-1}\: mod\: Q(i = 1, 2,\cdots, l)$.

(5) $D$ calculates $H_{i}=s_{e_{i}}(u_{i-1},u_{i-1})\:mod\:N_{i}$ and $T_{i} = g^{u_{i-1}}\: mod\: q $,($1 \leq i \leq m$).

(6) $D$ releases $(H_{1}, H_{2},\cdots, H_{m}, T_{1}, T_{2},\cdots, T_{m},y_{1},y_{2},\cdots,y_{l},c,u_{m+l})$.

\begin{remark}
According to Lemma 2, we know that $s_{-k}(a,b)=s_{k}(b,a)$. If $b=a$, we have $s_{-k}(a,a)=s_{k}(a,a)$, which means that

$$H_{i}=s_{e_{i}}(u_{i-1},u_{i-1})\:mod\:N_{i}=s_{-e_{i}}(u_{i-1},u_{i-1})\:mod\:N_{i},\:1 \leq i \leq m.$$

\end{remark}
\subsubsection{Verification phase}

\qquad In order to obtain the subshadow $u_{i-1}$, $P_{i}$ calculates following formula:
\begin{align*}
u_{i-1} & =s_{1}(u_{i-1},u_{i-1})\\
        & =s_{d_{i}}(s_{e_{i}}(u_{i-1},u_{i-1}),s_{-e_{i}}(u_{i-1},u_{i-1}))\\
        & =s_{d_{i}}(H_{i},H_{i})\:mod\:N_{i},1\leq i\leq m.
\end{align*}

By the formulas below, our schemes can perform validity and consistency detection.

$$ \prod_{j=0}^{k}(T_{i+1+t-j})^{\left( {\begin{array}{*{20}{ccc}}
	k\\
	j
	\end{array}} \right)}\overset{?}{=}g^{(-1)^{i}c}\:mod\:q$$

$$T_{i}\overset{?}{=}g^{u_{i-1}}\: mod\:q$$

Once the equations above are satisfied, each $u_{i-1}$ is thought to be valid and in accord with released messages. If every verification succeeds, participants think that $D$ is not malicious.

\subsubsection{Reconstruction phase}

\qquad Suppose that at least $k$ participants $\{P_{i}\}_{i\in I}(I\subseteq \{1,2,\cdots,m\})$ utilize corresponding $\{u_{i-1}\}_{i\in I}$ to reconstruct the secrets. Every $P_{i}$ can detect the validity of $\{u_{j-1}\}_{j\in I}(j\neq i)$ using the formulas as below:

$$g^{u_{j-1}}\overset{?}{=}T_{j}\:mod\:q,\quad j\in I,j\neq i.$$

The following two methods can be utilized:

\textbf{Method 1:} Using $k$ valid subshadows $\{u_{i-1}|i \in J \subseteq I, |J|=k\}$ and the released $u_{m+l}$, they can get the formulas by Theorem 1:

$$z_{1}+z_{2}(i-1)+\cdots+z_{k+1}(i-1)^{k}=u_{i-1}(-1)^{i-1}\:mod\:Q,i\in \{J \cup (m+l+1)\}.$$
Solving the equations or utilizing the Lagrange interpolation, they get $z_{1} = A_{1}, z_{2} =
A_{2},\cdots, z_{k+1} = A_{k+1}$ in $Z_{Q}$.

Then, they obtain
$$u_{i-1}=(A_{1}+A_{2}(i-1)+\cdots+A_{k+1}(i-1)^{k})(-1)^{i-1}\: mod\:Q,$$
where $i\in\{1,2,\cdots,m+l+1\}/\{J\cup (m+l+1)\}$.

Therefore, they reconstruct the secrets: $S_{i}=y_{i}+u_{m+i-1}\:mod\:Q\,(i=1,2,\cdots,l).$

\textbf{Method 2:} If utilizing $k$ successive $\{u_{i},u_{i+1},\cdots,u_{i+k-1}\}$, they can calculate other $u_{j},j\in\{0,1,\cdots,m+l\}/\{i,i+1,\cdots,i+k-1\}$ by the formulas:
$$\sum_{j=0}^{k}\left( {\begin{array}{*{20}{ccc}}
	k\\
	j
	\end{array}} \right)u_{n+k-j}=(-1)^{n}c\:mod\:Q \quad(n\geq 0).$$

Therefore, they reconstruct the secrets: $S_{i}=y_{i}+u_{m+i-1}\:mod\:Q\,(i=1,2,\cdots,l).$

\subsection{Scheme 2}

\qquad Scheme 2 utilizes the $[NLR2]$, the LFSR public key cryptosystem and the discrete logarithm problem.
\subsubsection{Initialization phase}
\qquad The initialization phase in Scheme 2 is the same as Scheme 1.
\subsubsection{Construction phase}
\qquad Compared with Scheme 1 ,we substitute $[NLR1]$ with the $[NLR2]$, and the rest is the same.

\subsubsection{Verification phase}

\qquad $P_{i}$ can calculate $u_{i-1}=s_{d_{i}}(H_{i},H_{i})\:mod\:N_{i}(1\leq i\leq m)$ to obtain corresponding subshadow. By the formulas below, our schemes can perform validity and consistency detection.:

$$ \prod_{j=0}^{k}(T_{i+1+k-j})^{(-1)^{j}\left( {\begin{array}{*{20}{ccc}}
	k\\
	j
	\end{array}} \right)}\overset{?}{=}g^{c}\:mod\:q$$

$$T_{i}\overset{?}{=}g^{u_{i-1}}\: mod\:q$$

Once the equations above are satisfied, each $u_{i-1}$ is thought to be valid and in accord with released messages. If every verification succeeds, participants think that $D$ is not malicious.

\subsubsection{Reconstruction phase}
\qquad Suppose that at least $k$ participants $\{P_{i}\}_{i\in I}(I\subseteq \{1,2,\cdots,m\})$ utilize corresponding $\{u_{i-1}\}_{i\in I}$ to reconstruct the secrets. Every $P_{i}$ can detect the validity of $\{u_{j-1}\}_{j\in I}(j\neq i)$ using the formulas as below:

$$g^{u_{j-1}}\overset{?}{=}T_{j}\:mod\:q,\quad j\in I,j\neq i.$$

\textbf{Method 1:} Using $k$ valid subshadows $\{u_{i-1}|i \in J \subseteq I, |J|=k\}$ and the released $u_{m+l}$, they can get the formulas by Theorem 2:

$$z_{1}+z_{2}(i-1)+\cdots+z_{k+1}(i-1)^{k}=u_{i-1}\:mod\:Q,i\in\{J\cup (m+l+1)\}.$$
Solving the equations or utilizing the Lagrange interpolation, they get $z_{1} = A_{1}, z_{2} =
A_{2},\cdots, z_{k+1} = A_{k+1}$ in $Z_{Q}$.

Then, they obtain
$$u_{i-1}=A_{1}+A_{2}(i-1)+\cdots+A_{k+1}(i-1)^{k}\: mod\:Q,$$
where $i\in\{1,2,\cdots,m+l+1\}/\{J\cup (m+l+1)\}$.

Therefore, they reconstruct the secrets: $S_{i}=y_{i}+u_{m+i-1}\:mod\:Q\,(i=1,2,\cdots,l).$

\textbf{Method 2:} If utilizing $k$ successive $\{u_{i},u_{i+1},\cdots,u_{i+k-1}\}$, they can calculate other $u_{j},j\in\{0,1,\cdots,m+l\}/\{i,i+1,\cdots,i+k-1\}$ by the formulas:

$$\sum_{j=0}^{k}\left( {\begin{array}{*{20}{ccc}}
	k\\
	j
	\end{array}} \right)(-1)^{j}u_{n+k-j}=c\:mod\:Q \quad (n\geq 0).$$

Therefore, they reconstruct the secrets: $S_{i}=y_{i}+u_{m+i-1}\:mod\:Q\,(i=1,2,\cdots,l).$


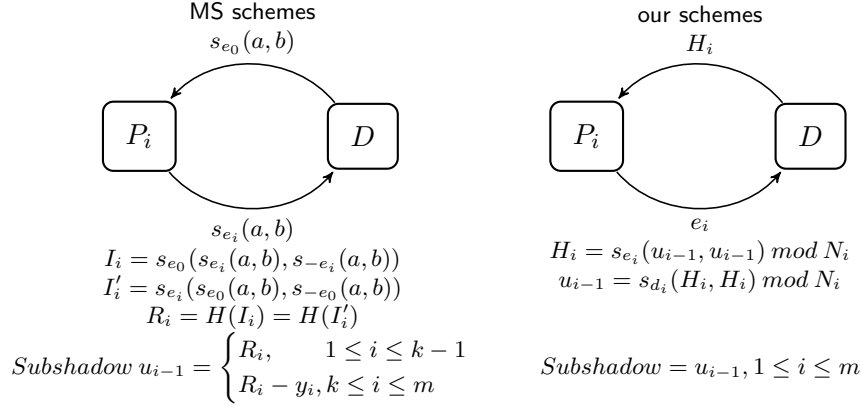
\begin{figure}[!htbp]
\begin{center}
\begin{tikzpicture}[
  font=\sffamily,
  every matrix/.style={ampersand replacement=\&,column sep=2cm,row sep=2cm},
  source/.style={draw,thick,rounded corners,inner sep=.3cm},
  process/.style={draw,thick,circle,fill=green!20},
  sink/.style={source},
  datastore/.style={draw,very thick,shape=datastore,inner sep=.3cm},
  dots/.style={gray,scale=2},
  to/.style={->,>=stealth',shorten >=1pt,semithick,font=\sffamily\footnotesize},
  every node/.style={align=center}]

  \matrix{

    \node[sink] (1) {$P_{i}$};
      \& \node[source] (2) {$D$};\qquad\qquad\qquad

       \& \node[sink] (3) {$P_{i}$};
       \& \node[source] (4) {$D$};
     \\
  };

  \draw[to] (1) to[bend right=50]
      node[midway,below] {$s_{e_{i}}(a,b)$\\$I_{i}=s_{e_{0}}(s_{e_{i}}(a,b),s_{-e_{i}}(a,b))$\\
      $I_{i}'=s_{e_{i}}(s_{e_{0}}(a,b),s_{-e_{0}}(a,b))$\\$R_{i}=H(I_{i})=H(I_{i}')$\\
      $Subshadow\: u_{i-1}=\begin{cases}
                  R_{i} ,\qquad 1\leq i\leq k-1\\
                  R_{i}-y_{i},k\leq i\leq m
                 \end{cases}$} (2);
  \draw[to] (2) to[bend right=50] node[midway,above] {MS schemes\\$s_{e_{0}}(a,b)$}
      (1);
  \draw[to] (3) to[bend right=50]
      node[midway,below] {$e_{i}$\\$H_{i}=s_{e_{i}}(u_{i-1},u_{i-1})\:mod\:N_{i}$\\
      $u_{i-1}=s_{d_{i}}(H_{i},H_{i})\:mod\:N_{i}$\\\\\\$Subshadow=u_{i-1},1\leq i \leq m $} (4);
  \draw[to] (4) to[bend right=50] node[midway,above] {our schemes\\$H_{i}$}
      (3);

\end{tikzpicture}

\end{center}
\captionsetup{justification=centering}
\caption{The difference between MS schemes and our schemes}\label{fig:figname}
\end{figure}


From the Figure 1, in MS schemes \cite{MASHHADI201531}, every $P_{i}$ selects $e_{i}$ independently, and computes corresponding $s_{e_{i}}(a,b)$, then transmits it to $D$. Thereafter, $D$ and $P_{i}$ can obtain secret share $R_{i}=H(I_{i})=H(I_{i}')$, where $I_{i}=s_{e_{0}}(s_{e_{i}}(a,b),s_{-e_{i}}(a,b)),I_{i}'=s_{e_{i}}(s_{e_{0}}(a,b),s_{-e_{0}}(a,b))$.
However, whether the $R_{i}$ used in the generation of $\{u_{i-1}\}$ is the same as the one provided by $P_{i}$ is not verified.

While in our schemes, $P_{i}$ computes $d_{i}$ and maintains confidentiality of $D$, then transmits $e_{i}$ to $D$, where $e_{i}$ is relevant public key of $d_{i}$. Then $D$ can get $H_{i}$ by $e_{i}$, where $H_{i}=s_{e_{i}}(u_{i-1},u_{i-1})\:mod\:N_{i}$. After that $P_{i}$ can calculate its subshadow $u_{i-1}$ by $d_{i}$. We add the consistency detection between the $u_{i-1}$ and released messages to perceive malicious dealer.


\section{Security analysis}
\qquad The security analysis of our two schemes is analogous, so we take Scheme 1 as example. Generally, when it comes to the security of a $(k,m)$ VMSS scheme, there are three conditions to be satisfied.

\textbf{(1)\,Correctness}: Provided that all participants and their dealer behave authentically, at least $k$ participants can recover shared secrets successfully.

\textbf{(2)\,Verifiability}:

$\bullet$ In the verification phase, any participant can detect dishonest operation by the dealer.

$\bullet$ In the reconstruction phase, any other participants can detect a false subshadow by a malicious participant.

\textbf{(3)\,Privacy}: Corruption of at most $k-1$ participants cannot acquire any information of secrets.

\subsection{Correctness}

\qquad At first, in order to test the consistency of $P_{i}'s$ subshadow $u_{i-1}$ with the released messages, we must publish the predefined constant $c$. When $c$ is unknown, the right side of the following equation is indeterminate.

$$ \prod_{j=0}^{k}(T_{i+1+k-j})^{\left( {\begin{array}{*{20}{ccc}}
	k\\
	j
	\end{array}} \right)}\overset{?}{=}g^{(-1)^{i}c}\:mod\:q.$$

Secondly, we explain the reason why we employ the $[NLR1]$ in Scheme 1. If we still use the original $NLR$ sequence \cite{MASHHADI201531}, which is
$$\left\{
\begin{aligned}
& u_{0}=c_{0},u_{1}=c_{1},\cdots,u_{k-2}=c_{k-2},\\
& \sum_{j=1}^{k}\left( {\begin{array}{*{20}{ccc}}
	k-1\\
	j-1
	\end{array}} \right)u_{n+k-j}=(-1)^{n}c\:mod \:Q \quad (n\geq 0).
\end{aligned}
\right.$$
We suppose that the corrupted $k-1$ participants are $\{P_{1},P_{2},\cdots,P_{k-1}\}$ with corresponding subshadows $\{u_{0},u_{1},\cdots,u_{k-2}\}$. Then the attacker can calculate the whole $NLR$ sequence since $c$ is released by the dealer $D$. In other words, an attacker corrupting $k-1$ participants can get other honest participants' subshadows, and the attacker can reconstruct the shared secrets successfully.

Thirdly, since we use $[NLR1]$ in the construction phase, we need construct a polynomial $p(x)$ with degree $k$ in the reconstruction phase as shown in the Theorem 1. There are $k+1$ indeterminate coefficients in $p(x)$, but we merely have $k$ subshadows. Besides the $k$ subshadows provided by $k$ honest participants, we need one more $u_{j}$. Because $\{u_{0},u_{1},\cdots,u_{m-1}\}$ are subshadows of participants $P_{1},P_{2},\cdots,P_{m}$, and $\{u_{m},u_{m+1},\cdots,u_{m+l-1}\}$ are correlated with the shared secrets $S_{1},S_{2},\cdots,S_{l}$. Then $u_{m+l}$ is the only term satisfying the demand. Therefore, we release $u_{m+l}$ in the construction phase.

Finally, if the dealer and the participants are honest, any at least $k$ participants can reconstruct the shared secrets using the two methods mentioned in the Section 3.1.4.
\begin{remark}
In fact, the aforementioned $NLR$ sequence has degree $k-1$. In MS schemes \cite{MASHHADI201531}, the constant $c$ is not released. Therefore, the designer must use a $k-1$-degree $NLR$ to satisfy the requirement of a secure $(k,m)$ VMSS scheme.

To sum up, we employ the $[NLR1]$ for the following reasons:

(1) The constant $c$ has to be released for verifying the consistency of $P_{i}'s$ subshadow $u_{i-1}$ with released messages.

(2) A $k$-degree $NLR$ sequence is utilized and the term $u_{m+l}$ has to be released to make Scheme 1 be a secure $(k,m)$ VMSS scheme.

\end{remark}
\subsection{Verifiability}

\qquad In the verification phase, if $u_{i-1}$ is valid, and $D$ succeeds in providing a false $u_{i-1}'$ to $P_{i}(u_{i-1}\neq u_{i-1}')$, which means that  $T_{i}=g^{u_{i-1}}=g^{u_{i-1}'}\: mod \:q$, and
$$ \prod_{j=0}^{k}(T_{i+1+k-j})^{\left( {\begin{array}{*{20}{ccc}}
	k\\
	j
	\end{array}} \right)}=g^{(-1)^{i}c}\:mod\:q.$$
Because $u_{i-1},u_{i-1}'\in Z_{Q}$, the possibility of $u_{i-1}\neq u_{i-1}'$ can be neglected. Then we conclude that $u_{i-1}'= u_{i-1}$, which means that $D$ cannot distribute a false $u_{i-1}'$ to $P_{i}$ successfully.

Besides, once a malicious $P_{i}$ succeeds in providing a false $u_{i-1}'$ during the reconstruction phase, the other participants get $T_{i}=g^{u_{i-1}}=g^{u_{i-1}'}\: mod \:q$, which implies $u_{i-1}'=u_{i-1}\:mod \:Q$. Then the dishonest $P_{i}$ can be discovered.

\subsection{Privacy}

\qquad Because $T_{i}=g^{u_{i-1}} \; mod \; q(i=1,2,\cdots,m)$ and the discrete logarithm problem in $Z_{q}^{\ast}$ with the base $g$ is hard, the attacker gets no helpful messages of $u_{i-1}$ from $T_{1},T_{2},\cdots,T_{m}$. If the attacker wants to obtain messages of $u_{i-1}$ from $H_{i}$, where $H_{i}=s_{e_{i}}(u_{i-1},u_{i-1}) \; mod \;N_{i}$ for $i=1,2,\cdots,m$, he must break the LFSR public key cryptosystem, which is impossible under our assumption.

Not mastering at least $k$ subshadows, the attacker cannot utilize the $NLR$ sequence $(u_{j})_{j\geq 0}$ to get $u_{m},u_{m+1},\cdots,u_{m+l-1}$. Then, the attacker gets nothing about shared secrets, namely $u_{m}+y_{1},u_{m+1}+y_{2},\cdots,u_{m+l-1}+y_{l}$.

\begin{theorem}
The corruption of at most $k-1$ participants cannot obtain any helpful messages of the secrets.
\end{theorem}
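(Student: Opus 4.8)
The plan is to combine two ingredients: the computational hiding of the individual subshadows $u_{i-1}$ by the published data, and an information-theoretic counting argument showing that $k-1$ subshadows are too few to pin down the terms of the $NLR$ sequence that carry the secrets. Throughout I would use Theorem 1, which tells us that the whole sequence has the form $u_n=p(n)(-1)^n$ for a single polynomial $p$ of degree at most $k$; thus the sequence is governed by the $k+1$ coefficients of $p$, and since every $y_i$ is public, recovering the secrets $S_i=y_i+u_{m+i-1}$ is equivalent to recovering $u_m,\dots,u_{m+l-1}$, hence to pinning down $p$ at those points.

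First I would show that the published commitments leak nothing about any uncorrupted subshadow. For each $i$ the dealer releases $T_i=g^{u_{i-1}}\bmod q$ and $H_i=s_{e_i}(u_{i-1},u_{i-1})\bmod N_i$. Extracting $u_{i-1}$ from $T_i$ is precisely the discrete logarithm problem for the base $g$ in $Z_q^{\ast}$, and extracting it from $H_i$ is precisely an inversion of the LFSR public key cryptosystem; both are infeasible under the standing assumptions. Hence, apart from the $k-1$ subshadows revealed by corruption, the adversary's view is computationally independent of the remaining subshadows.

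Next comes the counting step. By Theorem 1 the admissible sequences correspond to the $(k+1)$-dimensional space of polynomials $p$ of degree $\le k$ (after the twist by $(-1)^n$). An adversary controlling $k-1$ participants learns the $k-1$ evaluations $p(i-1)=u_{i-1}(-1)^{i-1}$ at the corrupted indices, and the released term $u_{m+l}$ supplies one further evaluation, for a total of only $k$ evaluations of a degree-$\le k$ polynomial. Since $k$ points leave a one-parameter family of admissible $p$, I would argue that as this free parameter ranges over $Z_Q$ the induced values $u_m,\dots,u_{m+l-1}$, and therefore the secrets $S_i=y_i+u_{m+i-1}$, are uniformly distributed from the adversary's standpoint. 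Equivalently, in the Method 2 formulation the recursion advances only once $k$ consecutive terms are known, a configuration that $k-1$ subshadows can never produce. Combining this with the hiding step yields the theorem.

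The main obstacle is making the counting step airtight in the presence of the published constant $c$. The recursion forces the leading coefficient of $p$ to equal the normalized $k$-th finite difference of $p$, which is fixed by $c$; one must therefore check with care that $c$ does not act as an extra independent evaluation that, together with the $k-1$ corrupted subshadows and the released $u_{m+l}$, would over-determine $p$ and collapse the surviving degree of freedom. Isolating the role of $c$ — confirming that it enters only the verification relation $\prod_{j=0}^{k}(T_{i+1+k-j})^{\binom{k}{j}}=g^{(-1)^i c}\bmod q$ and not the reconstruction system in an exploitable way — and verifying that exactly one degree of freedom in $p$ survives, is where the real work lies.
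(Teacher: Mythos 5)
Your first step (the computational hiding of the uncorrupted $u_{i-1}$ behind $T_i$ and $H_i$) matches the paper and is unobjectionable. The genuine gap is the counting step, and it sits exactly where you yourself flagged ``the real work'': the check you defer does not go through. By Theorem 1 write $u_n=p(n)(-1)^n$ with $\deg p\le k$. Substituting this into $[NLR1]$ shows that the $k$-th finite difference of $p$ is constant and equal to $(-1)^k c \bmod Q$, i.e.\ the \emph{published} constant $c$ fixes the leading coefficient $a_k=(-1)^k c\,(k!)^{-1}\bmod Q$ (here $k!$ is invertible since $Q$ is a prime exceeding $\binom{k}{1}=k$). So $c$ is not merely a verification artifact: it is one full, independent linear constraint on the $(k+1)$-dimensional space of admissible $p$. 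The adversary then knows $a_k$ from $c$, the $k-1$ values $p(0),\dots,p(k-2)$ from the corrupted parties, and $p(m+l)$ from the released $u_{m+l}$; setting $r(x)=p(x)-a_kx^k$, a polynomial of degree at most $k-1$ whose values are known at $k$ distinct points, Lagrange interpolation determines $r$, hence $p$, hence every $u_{m+i-1}$ and every secret $S_i=y_i+u_{m+i-1}\bmod Q$. No degree of freedom survives, the uniformity claim in your third paragraph is false, and the proof cannot be completed along the lines you propose.

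For comparison, the paper's own proof reaches its ``one free parameter $u_{k-1}$, guessed with probability $1/Q$'' conclusion only because it restricts the linear system to the recursion equations for $n=0,\dots,m-k-1$ among $u_0,\dots,u_{m-1}$ and never feeds the released $u_{m+l}$ back into that system; your more careful accounting, which rightly includes $u_{m+l}$ (the paper's own correctness discussion is explicit that $u_{m+l}$ supplies the $(k+1)$-st interpolation point), exposes that the system then closes once $c$ is also counted. To obtain a provable privacy statement one would have to remove either $c$ or $u_{m+l}$ from the adversary's view, or change the degree of the recursion --- as written, the step you postponed is not a technicality but the precise point at which the argument (and the claimed bound) fails.
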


\begin{proof}
The attacker cannot obtain any helpful messages of the $[NLR1]$ from $T_{i}$ and $H_{i}$. Therefore, in order to obtain honest participants' subshadows, the attacker can only utilize the at most $k-1$ corrupted participants' subshadows. We might as well assume $\{P_{1},P_{2},\cdots,P_{k-1}\}$ to be the malicious participants. The attacker can merely get the $NLR$ sequence $(u_{n})_{n\geq 0}$ from the formulas below by $\{u_{0},u_{1},\cdots,u_{k-2}\}$ and released constant $c$:

$$\left\{
\begin{aligned}
& u_{0}=c_{0},u_{1}=c_{1},\cdots,u_{k-2}=c_{k-2},\\
& \sum_{j=0}^{k}\left( {\begin{array}{*{20}{ccc}}
	k\\
	j
	\end{array}} \right)u_{n+k-j}=(-1)^{n}c\:mod \:Q,\quad n=0,1,\cdots,m-k-1.
\end{aligned}
\right.$$
There are $m-k$ linear equations, but $m-k+1$ variables $u_{k-1},u_{k},\cdots,u_{m-1}$. Then the attacker has to select the value of $u_{k-1}$ in $Z_{Q}$ randomly, which implies the probability to get other participants' subshadows successfully is $1/Q$. Since $Q>2^{\frac{\lambda}{2}}$, the probability is less than $1/2^{\frac{\lambda}{2}}$. Therefore, the corruption of at most $k-1$ participants cannot obtain any helpful messages of the secrets.

\end{proof}


\section{Performance analysis}
\qquad In this section, we compare some proposed VMSS schemes with ours from four aspects.

\subsection{Computational cost}

\qquad As for the computational cost, since the modular exponentiation costs the great amount of time, Table 1 displays the difference in some presented schemes and ours. And Table 2 demonstrates the cryptographic knowledge used in the four phases of these schemes.

\begin{table}[!htbp]
\caption{\textbf{$Computational\,cost$}}
\scalebox{1.055}[1.1]{
\begin{tabular}{|c|c|c|c|c|c|}
\hline
 \multirow{2}{*}{Scheme} &
 \multicolumn{2}{c|}{Initialization} &
 \multicolumn{1}{c|}{Construction} &
 \multicolumn{1}{c|}{Verification} &
 \multicolumn{1}{c|}{Reconstruction}\\
 \cline{2-6}
   &  $D$ & $P_{i}$ & $D$ & $P_{i}$ &  $P_{i}$ \\
 \hline
 MS1\cite{HADIANDEHKORDI20082262}  & $m$ & 1 & $m$   & - & $k-1$\\
 \hline
 MS2\cite{HADIANDEHKORDI20082262}  & 0 & 1 & $m+1$ & - & $k$\\
 \hline
 LZZ1\cite{Liu:2016:AVM:2869973.2870260} & 0 & 0 & $m+k(l\leq k)$ & $k+2(l\leq k)$ & $(k-1)(l+1)$\\
 \hline
 LZZ2\cite{Liu:2016:AVM:2869973.2870260} & 0 & 0 & $2m$ & $k+3$ & $k-1$\\
 \hline
 our schemes & 0 & 0 & $m$ & $k+3$ & $k-1$\\
 \hline
\end{tabular}}
\end{table}

\begin{table}[h]
\caption{Cryptographic knowledge used in cited schemes }
\scalebox{0.85}[0.8]{
\begin{tabular}{|c|c|c|c|c|}
\hline
Scheme & Initialization & Construction & Verification & Reconstruction\\
\hline
  \makecell*[c]{MS1\cite{HADIANDEHKORDI20082262}\\MS2\cite{HADIANDEHKORDI20082262}} &
  \makecell*[c]{RSA\\Diffie-Hellman}&
  \makecell*[c]{HLR \\RSA} &
  \makecell*[c]{RSA\\Diffie-Hellman}&
  \makecell*[c]{HLR or \\ Lagrange \\ Interpolation}\\
\hline
  \makecell*[c]{LZZ1\cite{Liu:2016:AVM:2869973.2870260}} &
  \makecell*[c]{RSA}&
  \makecell*[c] {$(l-1)$-degree polynomial or \\$(k-1)$-degree polynomial\\RSA } &
  \makecell*[c]{RSA\\Diffie-Hellman}&
  \makecell*[c]{Lagrange\\ Interpolation}\\
\hline
  \makecell*[c]{LZZ2\cite{Liu:2016:AVM:2869973.2870260}} &
  \makecell*[c]{RSA}&
  \makecell*[c]{HLR \\RSA} &
  \makecell*[c]{RSA\\Diffie-Hellman}&
  \makecell*[c]{HLR or \\ Lagrange \\Interpolation}\\
\hline
  \makecell*[c]{MS\cite{MASHHADI201531}} &
  \makecell*[c]{LFSR PK \\ cryptosystem}&
  \makecell*[c]{NLR \\LFSR PK cryptosystem} &
  \makecell*[c]{LFSR PK \\ cryptosystem} &
  \makecell*[c]{NLR or \\ Lagrange \\Interpolation}\\
\hline
  \makecell*[c]{our schemes} &
  \makecell*[c]{LFSR PK \\ cryptosystem} &
  \makecell*[c]{NLR \\LFSR PK cryptosystem} &
  \makecell*[c]{LFSR PK \\cryptosystem\\Diffie-Hellman} &
  \makecell*[c]{NLR or \\ Lagrange\\ Interpolation}\\
\hline
\end{tabular}}
\end{table}

From Table 1, we know that our schemes are the most effective in first two phases. Compared with the schemes in \cite{Liu:2016:AVM:2869973.2870260}, our schemes replace the RSA encryption scheme by LFSR public key cryptosystem in the construction phase, which can be seen from Table 2. This replacement decreases the number of modular exponentiation used in the construction phase of our schemes. Since we append consistency check to detect the malicious dealer, our schemes need more modular exponentiations than the first two schemes and MS schemes \cite{MASHHADI201531} in the verification phase. Because the schemes in \cite{MASHHADI201531} do not utilize the modular exponentiation, we do not list it in Table 1.

\subsection{Communication cost}

\qquad We give the communication cost in the first two phases of some schemes in Table 3. It implies that new VMSS schemes are as effective as MS1, MS2, MS schemes in the first phase, but a little less effective in the second phase, owing to the fact that we request $D$ to provide released messages for detecting the malicious behavior of the dealer. The serious consequences of this shortage is showed in the Section 2.3.

\begin{table}[h]
\caption{\textbf{$Communication\,cost$}}
\scalebox{0.75}[0.85]{
\begin{tabular}{|c|c|c|c|}
\hline
 \multirow{2}{*}{Scheme} &
 \multicolumn{2}{c|}{Initialization} &
 \multicolumn{1}{c|}{Construction} \\
\cline{2-4}
 & $D$  Publish & $P_{i}$ to $D$ & $D$ Publish\\
\hline
  \makecell*[c]{MS1\cite{HADIANDEHKORDI20082262}} &
  \makecell*[c]{$(e,N,g,q,\alpha)$\\$(ID_{i},T_{i}),i=1,2,\cdots,m$ } &
  \makecell*[c]{$(ID_{i},T_{i})$,\\$i=1,2,\cdots,m $} &
  \makecell*[c]{($r,G_{1},G_{2},\cdots,G_{m},r_{1},r_{2},$\\
  $\cdots,r_{l},y_{k+1},y_{k+2},\cdots,y_{m})$}\\
\hline
  \makecell*[c]{MS2\cite{HADIANDEHKORDI20082262}} &
  \makecell*[c]{$(N,g,q,\alpha)$\\$(i,R_{i}),i=1,2,\cdots,m$ } &
  \makecell*[c]{$(i,R_{i})$,\\$i=1,2,\cdots,m $} &
  \makecell*[c]{($R_{0},f,r_{1},r_{2},\cdots,r_{l}$,\\
  $y_{k+1},y_{k+2},\cdots,y_{m})$} \\
\hline
  \makecell*[c]{LZZ1\cite{Liu:2016:AVM:2869973.2870260}} &
  \makecell*[c]{$(\lambda,N,Q,q,g)$\\$(ID_{i},e_{i},N_{i}),i=1,2,\cdots,m$ } &
  \makecell*[c]{$(ID_{i},e_{i},N_{i})$,\\$i=1,2,\cdots,m $} &
  \makecell*[c]{$l\leq k,(C_{1},C_{2},\cdots,C_{m},H_{1},H_{2},$\\
  $\qquad\cdots,H_{m},A_{1},A_{2},\cdots,A_{k})$ \\
  $l>k,(C_{1},C_{2},\cdots,C_{m},H_{1},H_{2},\cdots,H_{m},$\\
  $\qquad\eta_{1},\eta_{2},\cdots,\eta_{l-k},A_{1},A_{2},\cdots,A_{l},$\\
  $f(\eta_{1}),f(\eta_{2}),\cdots,f(\eta_{l-k}))$}\\
\hline
  \makecell*[c]{LZZ2\cite{Liu:2016:AVM:2869973.2870260}} &
  \makecell*[c]{$(\lambda,N,Q,q,g,\alpha)$\\$(ID_{i},e_{i},N_{i}),i=1,2,\cdots,m$ } &
  \makecell*[c]{$(ID_{i},e_{i},N_{i})$,\\$i=1,2,\cdots,m$} &
  \makecell*[c]{$(H_{1},H_{2},\cdots,H_{m},T_{1},T_{2},\cdots,T_{m},$\\
  $Y_{1},Y_{2},\cdots,Y_{l})$}\\
\hline
  \makecell*[c]{MS\cite{MASHHADI201531}} &
  \makecell*[c]{$(N,a,b,q_{1})$\\$(ID_{i},s_{e_{i}}(a,b)),i=1,2,\cdots,m$ } &
  \makecell*[c]{$(ID_{i},s_{e_{i}}(a,b))$,\\$i=1,2,\cdots,m $} &
  \makecell*[c]{$(s_{e_{0}}(a,b),d,r_{1},r_{2},\cdots,r_{l},y_{k},y_{k+1},\cdots,y_{m}$)}\\
\hline
  \makecell*[c]{our schemes} &
  \makecell*[c]{$(\lambda,N,Q,q,g)$\\$(ID_{i},e_{i},N_{i}),i=1,2,\cdots,m$ } &
  \makecell*[c]{$(ID_{i},e_{i},N_{i})$,\\$i=1,2,\cdots,m $} &
  \makecell*[c]{$(H_{1},H_{2},\cdots,H_{m},T_{1},T_{2},\cdots,T_{m},$\\
  $y_{1},y_{2},\cdots,y_{l},c,u_{m+l}$)}\\
\hline
\end{tabular}}
\end{table}

\subsection{Dynamism}

\qquad In this subsection, we will illustrate how to process a dynamic update, cancel, and addition of the participants, the values of secrets and the threshold according to the actual situation.

\noindent{\textbf{Participants:}}

If a fresh participant $P_{new}$ wants to participate in the scheme, it computes $N_{new}=p_{new}q_{new}$, where $p_{new}$ and $q_{new}$ are strong primes. Then it chooses an integer $e_{new}$ and calculates the corresponding secret shadow $d_{new}$, then sends $(ID_{new},e_{new},N_{new})$ to $D$. Next $D$ calculates $H_{new}=s_{e_{new}}(u_{new-1},u_{new-1})\:mod\:N_{new}$ and $T_{new}=g^{u_{new-1}}\:mod\:q$, then releases them. When it comes to canceling a $P_{can}$, $D$ merely eliminates $(ID_{can},e_{can},N_{can})$. Once $P_{can}$ tries to utilize $u_{can-1}$ to recover the secrets, it is impossible not to be detected.

\noindent{\textbf{Secrets:}}

If $D$ wants to append a secret $S_{l+1}$, he merely calculates and releases $y_{l+1}=S_{l+1}-u_{m+(l+1)-1}$. Similarly, if $D$ wants to reduce a secret $S_{i}$, he merely eliminates $y_{i}=S_{i}-u_{m+i-1}$. When $D$ wants to alter the value of the secrets, the manipulation is completely evident.

\noindent{\textbf{Threshold:}}

In addition, as we analyzed in Section 4.1, our schemes are secure $(k,m)$ VMSS schemes. If $D$ wants to change the threshold of our schemes, he only need utilize a new $NLR$ sequence with a corresponding degree. For example, if we use the $NLR$ sequence with $k-1$ degree, our schemes are secure $(k-1,m)$ VMSS schemes, which is executable easily. Since the schemes in \cite{MASHHADI201531} are $(k,m)$ VMSS schemes, in order to compare them with ours, we also require that our schemes are secure $(k,m)$ VMSS schemes.

\subsection{Performance characteristic}
\qquad We analyze the performance characteristic of the schemes in \cite{HADIANDEHKORDI20082262,Liu:2016:AVM:2869973.2870260,MASHHADI201531} and our schemes in Table 4.

$\bullet$ Characteristic 1: Recover multiple secrets simultaneously

$\bullet$ Characteristic 2: Usage of the public channel

$\bullet$ Characteristic 3: Detect deception of malicious $D$

$\bullet$ Characteristic 4: Detect deception of malicious $P_{i}$

$\bullet$ Characteristic 5: Change the shared secrets after an unsuccessful reconstruction phase

$\bullet$ Characteristic 6: Recycle of the shadows with diverse $D$

$\bullet$ Characteristic 7: Recycle of the shadows when participants join in/quit from the group

$\bullet$ Characteristic 8: The bit length of private key in a 1024-bit finite field

$\bullet$ Characteristic 9: The bit length of public key in a 1024-bit finite field

\begin{table}[!htbp]
\caption{$Performance\, characteristic$}
\scalebox{1.16}[0.9]{
\begin{tabular}{|c|c|c|c|c|c|}
\hline
Characteristic  & MS1\cite{HADIANDEHKORDI20082262} & MS2\cite{HADIANDEHKORDI20082262} & LZZ1,2\cite{Liu:2016:AVM:2869973.2870260} &  MS\cite{MASHHADI201531} & our schemes\\
\hline
1  & YES & YES & YES & YES & YES\\
\hline
2 & YES & YES & YES & YES & YES\\
\hline
3 & NO & NO & YES & NO & YES \\
\hline
4  & YES & YES & YES & YES & YES\\
\hline
5  & NO & NO & NO & NO & NO\\
\hline
6  & NO & YES & YES & YES & YES\\
\hline
7  & YES & YES & YES & YES & YES\\
\hline
8  & 1024 & 1024 & 1024 & 340 & 340 \\
\hline
9 & 1024 & 1024 & 1024 & 340 & 340 \\
\hline
\end{tabular}}
\end{table}

From Table 4, MS1, MS2 and MS schemes cannot detect deception by malicious $D$, while LZZ1, LZZ2 and our schemes can overcome the drawback. However, in a $1024$-bit finite field, the length of the private or public key in our schemes are denoted by only $340$ bits, while in \cite{Liu:2016:AVM:2869973.2870260} the length is three times longer to achieve the same safety level, which implies our schemes are more efficient and have lower consumption. Besides, in MS1 \cite{HADIANDEHKORDI20082262} scheme, the shadows of participants cannot be reused after reconstructing the secrets. Therefore, the participants have to operate the first phase repeatedly, while our schemes allow them to reuse shadows, which reduces the consumption of initialization.

\section{Conclusion}

\qquad Dynamic and verifiable multi-secret sharing schemes share multiple secrets among a set of participants and detect the deception by malicious participants and the dealer dynamically. Utilizing the nonhomogeneous linear recursion and LFSR public key cryptosystem, we propose two efficient dynamic and verifiable multi-secret sharing schemes.

First, our schemes conquer the drawback of MS schemes \cite{MASHHADI201531} by adding consistency check between the participants' corresponding subshadows and released messages. Second, although our schemes have the same advantage as the schemes in \cite{Liu:2016:AVM:2869973.2870260}, we have less computational cost. Furthermore, since we substitute the RSA encryption scheme by the LFSR public key cryptosystem, the private/public key length of our schemes is only one-third of the schemes in \cite{Liu:2016:AVM:2869973.2870260} for the same safety level.

The final analyses of the proposed schemes indicate that they are secure and effective $(k,m)$ VMSS schemes, permitting recovery of multiple secrets simultaneously, using the public channel, detecting deception of both a malicious dealer and participants, reusing the shadows, having dynamism attribute, and having shorter public/private key length.

\bibliographystyle{plain}%
\bibliography{bibfile}

\begin{thebibliography}{10}

\bibitem{Biggs:2002:DM:579088}
Norman~L. Biggs.
\newblock {\em Discrete Mathematics}.
\newblock Oxford University Press, Inc., New York, NY, USA, 2nd edition, 2002.

\bibitem{HADIANDEHKORDI20082262}
Massoud~Hadian Dehkordi and Samaneh Mashhadi.
\newblock New efficient and practical verifiable multi-secret sharing schemes.
\newblock {\em Information Sciences}, 178(9):2262 -- 2274, 2008.

\bibitem{Feldman:1987:PSN:1382440.1383000}
Paul Feldman.
\newblock A practical scheme for non-interactive verifiable secret sharing.
\newblock In {\em Proceedings of the 28th Annual Symposium on Foundations of
  Computer Science}, SFCS '87, pages 427--438, Washington, DC, USA, 1987. IEEE
  Computer Society.

\bibitem{Gong:1999:PCB:2263211.2266021}
Guang Gong and Lein Harn.
\newblock Public-key cryptosystems based on cubic finite field extensions.
\newblock {\em IEEE Trans. Inf. Theor.}, 45(7):2601--2605, November 1999.

\bibitem{10.1007/3-540-45537-X_22}
Guang Gong, Lein Harn, and Huapeng Wu.
\newblock The gh public-key cryptosystem.
\newblock In Serge Vaudenay and Amr~M. Youssef, editors, {\em Selected Areas in
  Cryptography}, pages 284--300, Berlin, Heidelberg, 2001. Springer Berlin
  Heidelberg.

\bibitem{Liu:2016:AVM:2869973.2870260}
Yanhong Liu, Futai Zhang, and Jie Zhang.
\newblock Attacks to some verifiable multi-secret sharing schemes and two
  improved schemes.
\newblock {\em Inf. Sci.}, 329(C):524--539, February 2016.

\bibitem{MASHHADI201531}
Samaneh Mashhadi and Massoud~Hadian Dehkordi.
\newblock Two verifiable multi secret sharing schemes based on nonhomogeneous
  linear recursion and lfsr public-key cryptosystem.
\newblock {\em Information Sciences}, 294:31 -- 40, 2015.
\newblock Innovative Applications of Artificial Neural Networks in Engineering.

\bibitem{Rivest:1978:MOD:359340.359342}
R.~L. Rivest, A.~Shamir, and L.~Adleman.
\newblock A method for obtaining digital signatures and public-key
  cryptosystems.
\newblock {\em Commun. ACM}, 21(2):120--126, February 1978.

\bibitem{Shao:2005:NEV:2614701.2615126}
Jun Shao and Zhenfu Cao.
\newblock A new efficient (t,n) verifiable multi-secret sharing (vmss) based on
  ych scheme.
\newblock {\em Appl. Math. Comput.}, 168(1):135--140, September 2005.

\bibitem{Yang:2004:MSS:2626452.2627067}
Chou-Chen Yang, Ting-Yi Chang, and Min-Shiang Hwang.
\newblock A (t,n) multi-secret sharing scheme.
\newblock {\em Appl. Math. Comput.}, 151(2):483--490, April 2004.

\bibitem{Zhao:2007:PVM:1222223.1222354}
Jianjie Zhao, Jianzhong Zhang, and Rong Zhao.
\newblock A practical verifiable multi-secret sharing scheme.
\newblock {\em Comput. Stand. Interfaces}, 29(1):138--141, January 2007.

\end{thebibliography}

\end{document}